\documentclass[final,3p,times,twocolumn]{elsarticle}

\usepackage[american]{babel}
\usepackage{epsfig}
\usepackage{color}
\usepackage{amsthm,amsfonts,amsmath,amssymb,amstext,latexsym}
\usepackage{enumitem}
\usepackage{dblfloatfix} 

\usepackage{color}
\usepackage{xspace}

\usepackage{pgfplots}
\usepackage{here}
\usepackage{wasysym}

\usepackage[scr=boondox]{mathalfa}

\usetikzlibrary{patterns}
\usetikzlibrary{automata,arrows,positioning,calc,math,arrows.meta}
\usetikzlibrary{shapes,positioning}
\usetikzlibrary{fit,intersections,decorations.pathreplacing,matrix}

\allowdisplaybreaks

\newtheorem{theorem}{Theorem}
\newtheorem{lemma}{Lemma}

\newtheorem{proposition}{Proposition}

\newtheorem{definition}{Definition}

\newcommand{\N}{\mathbb{N}}

\newcommand{\E}{\mathbb{E}}

\newcommand{\Z}{\mathbb{Z}}

\newcommand{\pp}{\mathbb{P}}

\newcommand{\sign}{\text{sign}}

\journal{Operations Research Letters}

\newcommand{\Po}{{\bf p}}
\newcommand{\Pd}{{\bf q}}

\newcommand{\ol}{p_{\ell}}
\newcommand{\dl}{q_{\ell}}
\newcommand{\od}{p_{d}}
\newcommand{\ow}{p_{w}}
\newcommand{\dd}{q_{d}}
\newcommand{\dw}{q_{w}}

\newcommand\Off{\ensuremath{\mathrm{(O)}}\xspace}
\newcommand\Def{\ensuremath{\mathrm{(D)}}\xspace}
\newcommand{\cate}{\ensuremath{\mathrm{(cat)}}\xspace}
\newcommand{\catp}{\ensuremath{\mathrm{(cat+)}}\xspace}

\begin{document}
\begin{frontmatter}
\title{
Can a Weaker Player Win? Adaptive Play in Repeated Games
}

\author{
Jonatha ANSELMI and Bruno GAUJAL\\
\{jonatha.anselmi,bruno.gaujal\}@inria.fr\\
Univ. Grenoble Alpes, Inria, CNRS, Grenoble INP, LIG, 38000 Grenoble, France.
}

\begin{abstract}
Consider a two-player game repeated $N$ times.
Player~1 can choose between two styles (for interpretability, \emph{offensive} and \emph{defensive}), whereas Player~2 uses a single fixed style.
Let $X_N:=\#\text{wins}-\#\text{losses}$ for Player~1 after $N$ games, and define the match gain as $\E[\mathrm{sign}(X_N)]$, with $\mathrm{sign}(0)=0$. We assume Player~1 is weaker in the sense that each pure style is losing in expectation.
Our objective is to identify under which parameter regimes Player~1 can nevertheless achieve a positive gain under an optimal adaptive policy.

Using dynamic programming, we solve the finite-horizon control problem and numerically identify parameter regimes in which the optimal gain is strictly positive at some horizon $N^\star$. We also derive structural conditions guaranteeing that $g_N^\star$ is always negative, and regimes (notably with fair \Def) where $g_N^\star$ is nonnegative for all $N$ and can be strictly positive for every $N\ge 2$.
We then characterize the asymptotic behavior as $N\to\infty$ for a weak player.
In the \emph{safe} case, where the defensive style induces a sure draw, the limiting gain varies continuously with the parameters and may take any value in $[0,1]$.
In the \emph{non-safe} case, the limiting gain  converges to $-1$ when both styles are strictly losing, and to $0$ when \Def\ is fair (and non-safe).

\end{abstract}


\begin{keyword}
Sequential decision making; repeated games; dynamic programming; Parrondo's paradox
\end{keyword}

\end{frontmatter}


\section{Introduction}

\subsection{A challenge between two chess players}
\label{sec:chess}

We start with a simple yet striking situation illustrating how adaptivity can overturn a per-game disadvantage.
Two chess players face each other repeatedly, so empirical win--loss frequencies are well established. Player~1 can choose between two styles, \Off\ (offense) and \Def\ (defense), whereas Player~2 uses a single fixed style. Under either choice of Player~1, Player~2 is more likely to win. The game outcomes are:
\begin{itemize}
\item If Player~1 plays \Off, she wins with probability $0.45$ and loses with probability $0.55$ (no draws).
\item If Player~1 plays \Def, she wins with probability $0.10$, loses with probability $0.15$, and draws with probability $0.75$.
\end{itemize}
Player~1's one-game score is $+1$ for a win (W) and $-1$ for a loss (L), and her one-game expected gain is
\[
g_1 := \pp(W)-\pp(L).
\]
Thus Player~1 is weaker under both pure styles: $g_1^{\Off}=-0.10$ and $g_1^{\Def}=-0.05$. Despite this, Player~1 proposes a two-game match and we ask:
\begin{center}
\emph{Should one bet on Player~1 or on Player~2?}
\end{center}

Any non-adaptive two-game plan, i.e., $(\Off,\Off)$, $(\Off,\Def)$, $(\Def,\Off)$, or $(\Def,\Def)$, has strictly negative expected gain.
In contrast, Player~1 can achieve a positive expected gain with an adaptive rule: play \Off\ in game~1; in game~2 play \Def\ if game~1 was a win, and \Off\ otherwise.
A direct enumeration of the possible two-game outcomes yields the expected gain
\begin{align*}
g_2
&:= \pp(\text{W--W}) + \pp(\text{W--Draw}) - \pp(\text{L--L}) \\
&= 0.45\cdot 0.10 + 0.45\cdot 0.75 - 0.55\cdot 0.55 \\
&= 0.08,
\end{align*}
where outcomes with a draw in game~1 are impossible because game~1 is played under \Off.
This example illustrates a general mechanism: even if each pure style is losing in expectation, a state-dependent choice of styles over a finite horizon can yield a positive expected gain. In this paper, we study an idealized model capturing this effect.

\subsection{Model and addressed problem}

We formalize the preceding vignette as a finite-horizon repeated game between two players of unequal strength. Player~2 plays a single fixed style throughout. At each round, Player~1 chooses one of two styles: \Off\ (offense) or \Def\ (defense).
Conditional on the style chosen by Player~1 at round $n$, the round outcome is drawn independently of the past.
Under \Off, the outcome distribution is
\[
\Po = (\ow,\od,\ol), \qquad \ow+\od+\ol=1,
\]
where $\ow$ (resp.\ $\od,\ol$) is the probability that Player~1 wins (resp.\ draws, loses).
Under \Def, the distribution is
\[
\Pd = (\dw,\dd,\dl), \qquad \dw+\dd+\dl=1.
\]
We call \Def\ \emph{defensive} when draws are more likely under \Def\ than under \Off, i.e.,
\[
\dd \geq \od.
\]

In the remainder, we use the following terminology.

\begin{definition}
Player~1 is \emph{weak} if each pure style is losing in expectation, i.e.,
\[
\ow\le \ol \quad\text{and}\quad \dw\le \dl,
\]
and \emph{strictly weak} if both inequalities are strict.
\end{definition}

\begin{definition}
The defense \Def\ is \emph{safe} if $\dd=1$
and \emph{fair but non-safe} if $\dw=\dl$ and $\dd<1$.
\end{definition}

The example in Section~\ref{sec:chess} exhibits the core effect: despite strict weakness under both styles, an adaptive choice over two rounds yields a positive expected gain.

The match is played over $N$ games, and the outcome is evaluated \emph{globally} rather than game-by-game. A natural objective is the sign of the final score difference: Player~1 ``wins the match'' if the number of wins exceeds the number of losses, ``loses'' otherwise, and draws if the two are equal. Under this criterion, the relevant question is not whether Player~1 can outperform Player~2 in expectation in a single game (it cannot), but whether adaptive switching between styles over time can bias the \emph{distribution} of the final score enough to yield a positive expected match gain.

Let $W_n$ and $L_n$ denote the number of wins and losses of Player~1 after $n$ games  and let the {\it score} be $X_n := W_n-L_n$.
We evaluate the {expected } {\it gain}  at the end of the match via
\[
g_N := \E\!\left[\sign(X_N) \mid X_0 = 0\right], \qquad \sign(0)=0,
\]
so that $g_N>0$ means Player~1 has a positive expected advantage under the win/lose/draw match criterion.
The central question is whether, and for which parameter regimes, Player~1 can achieve $g_N>0$ by selecting $\Off$ or $\Def$ adaptively as a function of the match state (e.g., the current score difference), even though both pure styles are unfavorable.

This viewpoint places the interaction in the scope of finite-horizon sequential decision-making: at each round, Player~1 observes the current match state (e.g., the score difference so far) and selects a style that trades off immediate expected loss against variance and future opportunities. The key phenomenon is that, even when each pure style is unfavorable in isolation, a state-dependent mixture can exploit the finite horizon and the nonlinear match criterion to produce regimes where the weaker player has a positive chance---and even a positive expected gain---of winning the overall challenge.

\subsection{Contribution} 

In this paper, we are interested in the game above from the standpoint  of Player 1 (the weak player).
We aim at understanding under which conditions her gain is non-negative.

\begin{itemize}[leftmargin=*]

 \item
First, we clarify the role of the defensive play  parameter $\dd$. In the \emph{safe} regime $\dd=1$, we analyze the adaptive state-dependent policy \emph{catenaccio} \cate\ and obtain an explicit limit for its gain as $N\to\infty$ (Theorem~\ref{th:safeD}). In contrast, when \Def\ is {fair but non-safe} ($\dw=\dl>0$ and $\dd<1$), the same policy becomes asymptotically detrimental and its gain converges to a negative value (Theorem~\ref{th:fairD}). 

 \item
Then, we consider the \emph{optimal} adaptive policy. We formulate the problem as a finite-horizon control problem on the score state $X_n$ and solve it via Bellman optimality equation.
A numerical solution shows that $g_N^\star$ can be non-monotone in $N$ and that strict positivity may occur only at specific horizons $N^\star$ (see Figure~\ref{fig:smallN}). We complement these computations with structural results: 
(i) we provide simple sufficient conditions on $(\ow,\ol,\dw,\dl)$ under which $g_N^\star$ is \emph{uniformly negative} for all $N$, and identify a parameter regime (when \Def\ is fair) where $g_N^\star$ is nonnegative  (Proposition~\ref{prop:p2}).

 \item
Finally, we characterize the asymptotic behavior of the optimal gain as $N\to\infty$ (Theorem~\ref{th:3}).
This is the main technical difficulty and contribution.
Outside the safe defense case, the limit is \emph{quantized}: for weak players it converges either to $-1$ (both styles strictly losing) or to $0$ (defense is fair but non-safe defense). In the safe case, the limiting gain can take any value in $[0,1]$ as a function of the parameters, and the theorem also shows that a slight refinement of catenaccio, i.e., \catp, is essentially optimal in that case.
\end{itemize}

Our proofs combine random-walk arguments, finite-horizon dynamic programming, and asymptotic martingale estimates.


\section{Related Work}

The phenomenon introduced in Section~\ref{sec:chess}, i.e., combining or alternating between individually unfavorable modes of play can improve performance, is closely related to \emph{Parrondo's paradox} \cite{Harmer,ParrondoHarmerAbbott2000}, originally introduced in the context of gambling games.
In its canonical formulation, two Markovian games (or Markov chains) that are losing in isolation (negative drift)
can yield a positive long-run average gain (equivalently, positive asymptotic drift) when alternated or mixed.
This occurs because the per-step drift is state-dependent: alternating between games changes the occupation frequencies over states (e.g., capital modulo a period) and can therefore reverse the average drift.

Parrondo's paradox has found applications across disciplines, including physics, biology, and finance; see~\cite{LaiCheong2020} for a survey. The effect also extends beyond the two-game setting: variants with three or more individually losing games can still exhibit a winning outcome under suitable alternation schemes~\cite{ARENA2003545}.
Beyond random or periodic mixing,  finite-horizon optimal sequencing and adaptive selection rules have also been investigated, typically via backward induction/dynamic programming~\cite{Dinis2008,ARENA2003545}.

Our model shares this high-level mechanism, i.e., state-dependent switching between two styles.
In this sense, our setting is Parrondo-like: two individually losing modes can yield a positive outcome under switching, but here the effect is driven by finite-horizon adaptive control under a nonlinear match criterion, together with the safe/non-safe dichotomy of the defensive style.
In fact, in this paper we distinguish between the \emph{safe} (score-freezing) from \emph{non-safe} defense style of play. This yields a dichotomy that drives the long-run behavior: it enables a complete asymptotic classification for weak players and identifies regimes where \emph{catenaccio}-type policies are essentially optimal.
In addition, in contrast with the works above, our focus is not algorithmic. We are interested in structural conditions on the input parameters under which a weaker player can (or cannot) achieve a positive gain.

\section{Random walks, (un)certainty, and \emph{catenaccio}}
\label{sec:rw-catenaccio}

If Player~1 commits to a fixed style (say \Def), the match score evolves as a one-dimensional (lazy) random walk and
the probability that Player~1 wins an $N$-game match under \Def \ is (see, e.g., \cite{Feller})
\begin{equation}\label{eq:oneStyle}
\pp(X_N>0)\;=\;\sum_{\substack{i>j\ge 0\\ i+j\le N}}
\binom{N}{i}\,\binom{N-i}{j}\,
\dw^i \, \dl^j \, \dd^{N-i-j},
\end{equation}
where $i$ (resp.\ $j$) counts wins (resp.\ losses) and the remaining $N-i-j$ rounds are draws.
Therefore, if $\dw<\dl$ then $\pp(X_N>0) < 1/2$, the gain is negative for all $N$, $g^\Def_N <0$ and goes to -1 as $N$ goes to infinity.

On the other hand, with two available styles, a natural policy used by weaker teams in sports is to pursue variance early and then ``protect the lead'' once ahead. We model this as the \emph{catenaccio} policy:
\begin{center}
\emph{Play \Off\ until the score becomes positive,\\ then play \Def\ thereafter.}
\end{center}
Formally, let $\tau:=\inf\{n\ge 0:\ X_n=+1\}$ and play \Off\ up to time $\tau$ and \Def\ afterwards.

\subsection{Safe defense: a positive asymptotic gain}

When the defensive style \Def is safe, the expected gain of Player 1 under the catenaccio policy converges to  $2 \ow/\ol - 1$ as  $N$ goes to infinity.
Thus, it is positive if and only if $\ow \geq \ol/2$.
This is stated in the next result.

\begin{theorem}
\label{th:safeD}
If $\dd=1$,
then $g_N^\cate  \xrightarrow[N\to \infty]{} 2\ow/\ol - 1$.
\end{theorem}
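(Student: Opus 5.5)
Under \cate\ with $\dd=1$, the score is frozen at $+1$ once it first reaches $+1$, since every subsequent game under \Def\ is a sure draw. So $X_N=1$ on the event $\{\tau\le N\}$. On the complementary event $\{\tau>N\}$, Player~1 has played \Off\ in all $N$ games and the score has stayed $\le 0$. This gives the decomposition
\[
g_N^\cate \;=\; \pp(\tau\le N)\;-\;\pp(\tau>N,\,X_N<0)\;=\;2\,\pp(\tau\le N)-1+\pp(\tau>N,\,X_N=0).
\]
The plan is to compute $\lim_N \pp(\tau\le N)=\pp(\tau<\infty)$ and to show that $\pp(\tau>N,X_N=0)\to 0$.

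\emph{Step 1 (hitting probability).} Under \Off\ the score is a lazy random walk with steps $+1,0,-1$ having probabilities $\ow,\od,\ol$. The draws only delay the walk, so $\pp(\tau<\infty)$ equals the probability that the non-lazy walk with up-probability $\ow/(\ow+\ol)$ ever reaches $+1$. By the classical gambler's-ruin argument this probability is $\min(1,\ow/\ol)$. I would derive it from the first-step equation $h=\ow+\od h+\ol h^2$ for $h=\pp(\tau<\infty)$. Its roots are $1$ and $\ow/\ol$, and the relevant root is the minimal nonnegative one. Alternatively, I would use the martingale $(\ol/\ow)^{X_n}$ with optional stopping on the exit times from $(-M,1)$ and then let $M\to\infty$. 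Since $\pp(\tau\le N)\uparrow\pp(\tau<\infty)$ by monotone convergence, in the weak case $\ow\le\ol$ we get $2\pp(\tau\le N)-1\to 2\ow/\ol-1$.

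\emph{Step 2 (vanishing tie term).} I would bound $\pp(\tau>N,X_N=0)\le \pp(X_N^{\Off}=0)$, where $X^{\Off}$ is the pure-\Off\ walk. When $\ow<\ol$ this probability tends to $0$ by the law of large numbers, because $X_N/N\to\ow-\ol<0$ almost surely. When $\ow=\ol>0$, the local central limit theorem (or the explicit trinomial formula \eqref{eq:oneStyle} together with Stirling's formula) gives $\pp(X_N=0)=O(N^{-1/2})\to 0$. The degenerate case $\ow=\ol=0$, i.e.\ $\od=1$, must be treated separately. There \Off\ is also a sure draw, the gain is identically $0$, and the stated formula $2\ow/\ol-1$ is undefined, so this case is excluded implicitly.

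I expect the main subtlety to be Step~1 in the boundary case $\ow=\ol$. Recurrence of the symmetric walk gives $\pp(\tau<\infty)=1$ there, so the limit equals $1=2\ow/\ol-1$ and the formula remains consistent. If $\ow>\ol$ (not weak), the limit would be $1$ rather than $2\ow/\ol-1$. I would therefore state that the theorem is read under the paper's standing weakness assumption $\ow\le\ol$ with $\ol>0$, and mention the $\min$ correction for the non-weak case.
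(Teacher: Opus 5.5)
Your proof is correct and follows the same route as the paper. In both arguments the score freezes at $+1$ once catenaccio switches to the safe \Def, and the gambler's-ruin probability $\pp(\tau<\infty)=\ow/\ol$ of the \Off\ walk gives the limit. Your version is slightly more careful: you explicitly show $\pp(\tau\le N)\uparrow\pp(\tau<\infty)$ and that the tie term $\pp(\tau>N,\,X_N=0)$ vanishes. The paper instead simply asserts $\sign(X_N)=-1$ for all $N$ on the event that $+1$ is never hit, which is not literally true since the score can sit at $0$.
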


\begin{proof}
Under \Off, the score evolves as a (possibly lazy) random walk with increments in $\{-1,0,+1\}$ and drift $\ow-\ol\le 0$. Let
$
\mathcal{W}:=\{\exists n\ge 0:\ X_n=+1\}
$
be the event that the walk ever hits $+1$. Classical ruin/hitting-probability theory yields (see \cite[p.\ 272]{Feller}),
$
\pp(\mathcal{W})= {\ow}/{\ol},
$
with $\ow\le \ol$, for the non-lazy case; the same formula holds here since laziness only inserts zero-steps.
Under catenaccio with safe \Def, hitting $+1$ at any time implies $X_N=+1$ for all subsequent horizons, whereas on $\mathcal{W}^c$ the score never becomes positive, hence $\sign(X_N)=-1$ for all $N$. Therefore,
\[
\lim_{N\to\infty} g_N^\cate
= (+1)\pp(\mathcal{W}) + (-1)\pp(\mathcal{W}^c)
= 2\frac{\ow}{\ol}-1.
\]
This concludes the proof.
\end{proof}

\subsection{Fair but non-safe defense}

Now, assume that the defensive style \Def is fair but non-safe, meaning $\dw=\dl>0$.
Intuitively, once catenaccio switches to \Def, the score continues to fluctuate and eventually forgets the early advantage.
The next result shows that the expected gain of Player 1 under the catenaccio policy always becomes negative.

\begin{theorem}
\label{th:fairD}
If $\dw=\dl>0$, $g_N^\cate   \xrightarrow[N\to \infty]{} {\ow}/{\ol}-1 <0$.
\end{theorem}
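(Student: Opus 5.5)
We follow the proof of Theorem~\ref{th:safeD}: split on whether the \Off\ phase ever reaches $+1$, and on the event that it does, analyse what \Def\ does afterwards. Let $\tau=\inf\{n\ge 0: X_n=+1\}$ and $\mathcal{W}=\{\tau<\infty\}$. As before, $\pp(\mathcal{W})=\ow/\ol$. We decompose
\[
g_N^\cate=\E\!\left[\sign(X_N)\mathbf 1_{\{\tau> N\}}\right]+\E\!\left[\sign(X_N)\mathbf 1_{\{\tau\le N\}}\right].
\]
On $\{\tau>N\}$ the score never exceeds $0$ up to time $N$, so $\sign(X_N)\le 0$. Moreover $\sign(X_N)=-1$ except on $\{X_N=0,\ \tau>N\}$.

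\emph{First term.} We show $\pp(X_N=0,\tau>N)\to 0$. If $\ow<\ol$, the walk drifts to $-\infty$, so $\pp(X_N=0)\to0$. If $\ow=\ol>0$, local limit bounds give $\pp(X_N=0)=O(N^{-1/2})$. The degenerate case $\ow=\ol=0$ is excluded, since then $\ow/\ol$ is undefined; it is also the only case where $\od=1$, so we assume $\ol>0$. Since $\pp(\tau>N)\to 1-\ow/\ol$, the first term tends to $-(1-\ow/\ol)$.

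\emph{Second term.} By the strong Markov property at $\tau$, conditionally on $\{\tau=k\}$ with $k\le N$, the remaining process is $X_N=1+Y_{N-k}$. Here $Y$ is a lazy symmetric random walk started at $0$ with step probabilities $\dw=\dl>0$, independent of $\tau$. Hence
\[
\E\!\left[\sign(X_N)\mathbf 1_{\{\tau\le N\}}\right]=\sum_{k\le N}\pp(\tau=k)\,h(N-k),\qquad h(m):=\E[\sign(1+Y_m)].
\]
Symmetry of $Y$ gives $\pp(Y_m\ge 0)=\pp(Y_m\le 0)$, so
\[
h(m)=\pp(Y_m\ge 0)-\pp(Y_m\le -2)=\pp(Y_m=0)+\pp(Y_m=-1).
\]
Because $\dw>0$, the walk is genuinely recurrent and diffusive, and the local limit theorem gives $h(m)=O(m^{-1/2})\to 0$. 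Also $|h|\le 1$. Dominated convergence applies: split the sum at $k\le N/2$ and $k>N/2$, and use $\pp(N/2<\tau<\infty)\to0$. This shows the second term tends to $0$. Combining the two terms gives $g_N^\cate\to \ow/\ol-1$.

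\emph{Strict negativity.} We need $\ow<\ol$. If $\ow=\ol$, the limit would be $0$, so the claimed strict inequality presumably assumes strict weakness of \Off. I would state this explicitly.

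The main technical point is the uniform $O(m^{-1/2})$ bound on point probabilities of the lazy walks. It holds whenever the non-lazy part has positive probability, by Stirling applied to \eqref{eq:oneStyle}-type sums or by a standard local CLT. Once this bound is in place, everything else is bookkeeping with the strong Markov property.
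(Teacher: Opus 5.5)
Your proof is correct and takes the same route as the paper. You split on whether the \Off\ walk ever hits $+1$ (probability $\ow/\ol$), let the negative drift force $\sign(X_N)=-1$ off that event, and use the symmetry of the post-switch \Def\ walk to make the contribution of $\mathcal{W}$ vanish; the identity $h(m)=\pp(Y_m=0)+\pp(Y_m=-1)$ and the split of $\sum_k \pp(\tau=k)\,h(N-k)$ at $N/2$ just make rigorous the limit interchange the paper handles with a conditional CLT and ``$K<\infty$''. Your remark that the strict inequality needs $\ow<\ol$ is right (the paper assumes it tacitly), and the plain CLT already gives $h(m)\to 0$, so the $O(m^{-1/2})$ rate is not needed.
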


\begin{proof}
Let $\mathcal{W}$ be as in the proof of Theorem~\ref{th:safeD}; ruin theory again gives $\pp(\mathcal{W})=\ow/\ol$. On $\mathcal{W}^c$, the score never reaches $+1$, so catenaccio never switches and \Off\ is used at every round. Since $\ow<\ol$, the law of large numbers implies $X_N\to -\infty$ almost surely on $\mathcal{W}^c$, hence $\sign(X_N)\to -1$.

On $\mathcal{W}$, let $K:=\inf\{n\ge 0:\ X_n=+1\}$ be the first hitting time of $+1$. By definition of catenaccio, for all $n\ge 1$,
\[
X_{K+n}=1+\sum_{i=1}^n B_i,
\]
where $(B_i)_{i\ge 1}$ are i.i.d.\ with $\pp(B_i=+1)=\dw$, $\pp(B_i=0)=\dd$, and $\pp(B_i=-1)=\dl=\dw$. In particular, $B_i$ is mean-zero and symmetric, so $X_{K+n}-1$ is a centered (lazy) symmetric walk. By the central limit theorem (or a local limit theorem),
\[
\pp(X_{K+n}>0\mid K)\to \frac12,\qquad
\pp(X_{K+n}<0\mid K)\to \frac12,
\]
and therefore $\E[\sign(X_{K+n})\mid K]\to 0$. Since $K<\infty$ on $\mathcal{W}$, this yields
$\E[\sign(X_N)\mid \mathcal{W}]\to 0$ as $N\to\infty$.
Combining the two cases,
\begin{align*}
\lim_{N\to\infty} g_N^\cate
=&\,  \lim_{N\to\infty}\E[\sign(X_N)\mid \mathcal{W}]\,\pp(\mathcal{W}) \, +\,  \\
& \,\lim_{N\to\infty}\E[\sign(X_N)\mid \mathcal{W}^c]\,\pp(\mathcal{W}^c)\\
 =&\, 0\cdot \frac{\ow}{\ol} + (-1)\Bigl(1-\frac{\ow}{\ol}\Bigr),
\end{align*}
as desired.
\end{proof}

Therefore, when defense is merely fair but non-safe, the ``post-switch'' dynamics asymptotically erase the advantage, so catenaccio only yields transient finite-horizon benefits.

\section{Beyond \emph{catenaccio}: optimal policies and dynamic programming}
\label{sec:dp}

The \emph{catenaccio} policy provides a useful benchmark, but it is generally not optimal. In particular, when \Def\ is fair, we will see that the \emph{optimal} policy can yield a nonnegative gain for every horizon and may be strictly positive for finite $N$ even when Player~1 is weak under both pure styles. This section addresses the decision problem:
\begin{quote}
\emph{Given $\Po$ and $\Pd$, does there exist a horizon $N$ such that Player~1 can achieve $g_N(\Po,\Pd)>0$ under an optimal adaptive policy?}
\end{quote}

\subsection{Computing an optimal policy}
\label{subsec:compute}

For a fixed horizon $N$, the problem is a finite-horizon Markov decision process on the score $X_n$.
We encode the match outcome through the terminal value function
\[
V_N^N(x)=
\begin{cases}
+1, & x>0,\\
\phantom{+}0,  & x=0,\\
-1, & x<0.
\end{cases}
\]
For $0\le n\le N-1$, let $V_n^N(x)$ denote the optimal value at time $n$ when the current score is $x$. The Bellman optimality recursion reads: for all $n\ge 0$ and $x\in\{-n,\ldots,n\}$,
\begin{multline}
\label{eq:dyn}
V_n^N(x)
=\max\Bigl\{
\ow V_{n+1}^N(x+1)+\od V_{n+1}^N(x)+\ol V_{n+1}^N(x-1),\\
\dw V_{n+1}^N(x+1)+\dd V_{n+1}^N(x)+\dl V_{n+1}^N(x-1)
\Bigr\}.
\end{multline}
The optimal gain is the value at the origin:
\[
g_N^\star = V_0^N(0).
\]
In some cases, we mark the dependence on  $\Po$ and $\Pd$, by denoting the gain $g^\star_N (\Po,\Pd)$.
When the context is clear, we simply denote the gain as $g^\star_N$.
The backward computation induced by \eqref{eq:dyn} is illustrated in Figure~\ref{fig:dyn}.
\begin{figure}[h]
  \centering

\begin{tikzpicture}[
     xscale=0.8, yscale=0.8,
    emptynode/.style={circle, draw=blue!70,  fill=blue!30, inner sep=2pt},
    posnode/.style={circle, draw=green!60!black, fill=green!30, inner sep=2pt},
    negnode/.style={circle, draw=red!70!black, fill=red!40, inner sep=2pt}
]




\def\N{6}
\def\xoff{3}
\def\yoff{4}

\foreach \x in {0,...,\N} {
     \foreach \y in {0,...,\x} {

               \ifnum\x<\N
            \draw ({\x},{\y})
                  -- ({\x+1},{\y+1});
            \draw  ({\x},{\y})
                  -- ({\x+1},{\y-1});
                  \fi

               \ifnum\x<\N
            \draw ({\x},{\y})
                  -- ({\x+1},{\y});
                  \fi
                     \node[emptynode] at ({\x},{\y}) {};
  }
    \foreach \y in {0,...,\x} {

               \ifnum\x<\N
            \draw ({\x},{-\y}) -- ({\x+1},{-\y+1});
            \draw ({\x},{-\y}) -- ({\x+1},{-\y-1});
            \fi

               \ifnum\x<\N
            \draw ({\x},{-\y})
                  -- ({\x+1},{-\y});
        \fi
                 \node[emptynode] at ({\x},{-\y}) {};
      }

    \foreach \x in {4,...,\N} {
     \foreach \y in {4,...,\x} {
       \node[posnode] at ({\x},{\y}) {};
     }

     \foreach \y in {4,...,\x} {
              \node[posnode] at ({\x},{-\y+7}) {};
      }
    }

    \foreach \x in {4,...,\N} {
     \foreach \y in {4,...,\x} {
       \node[negnode] at ({\x},{-\y}) {};
     }

     \foreach \y in {4,...,\x} {
              \node[negnode] at ({\x},{\y-7}) {};
      }
    }

\foreach \y in {1,...,6} {
    \node[posnode] at ({\N},{\y}) {};
    \node[right, green!60!black] at ({\N+0.4},{\y}) {$+1$};
  }

\foreach \y in {1,...,6} {
    \node[negnode] at ({\N},{-\y}) {};
    \node[right, red!60!black] at ({\N+0.4},{-\y}) {$-1$};
  }

\node[emptynode] at ({\N},{0}) {};
\node[right,blue] at ({\N+0.6},{0}) {$0$};

  }


\foreach \x in {0,...,\N} {
  \node[below, font=\small] at (\x,-0.12) {\x};   
}

\end{tikzpicture}


\caption{Backward computation of $g_N^\star=V_0^N(0)$ for $N=6$. Terminal values are $\sign(X_6)$.
Green (resp.\ red) nodes are states from which a match win (resp.\ loss) is already forced, hence their values are $+1$ (resp.\ $-1$).}

  \label{fig:dyn}
\end{figure}
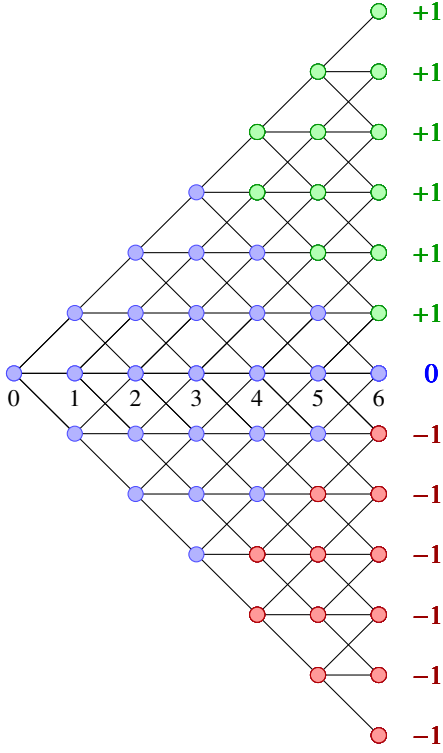
The lattice points $(n,x)$ shown in red or green need not be evaluated. Green nodes have value $+1$ (from such states Player~1 cannot lose, regardless of future actions), whereas red nodes have value $-1$ (Player~1 cannot win). Exploiting these forced regions eliminates roughly half of the states from the backward recursion.

\subsection{Properties of the optimal gain}
\label{subsec:props}

Numerical solutions of \eqref{eq:dyn} reveal two non-intuitive effects: (i) $g_N^\star$ can be \emph{non-monotone} in $N$; (ii) the optimal action is highly horizon-dependent. Figure~\ref{fig:smallN} illustrates this phenomenon for strictly losing parameters
$\ow = 0.43,\ \od = 0,\ \ol = 0.57$ and $\dw = 0.06,\ \dd = 0.86,\ \dl = 0.08,$
where the optimal gain is positive at $N^\star=4$ but not for all nearby horizons.

\begin{figure}[hbtp]
\includegraphics[width=1.0\linewidth]{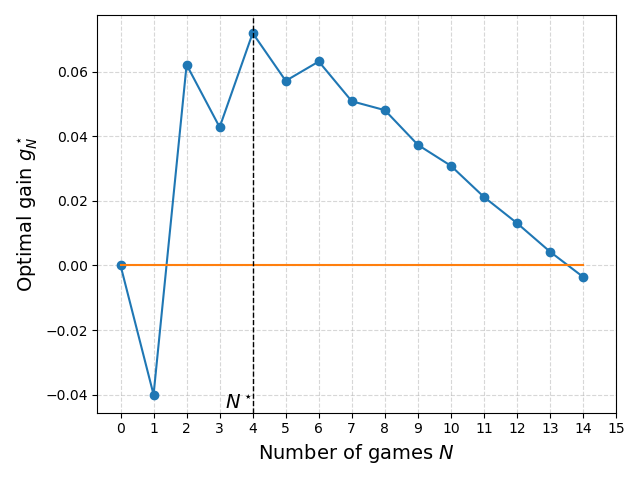}
\caption{Optimal gain $g_N^\star$ under the strictly losing parameters $(\ow ,\od, \ol) = (0.43,0,0.57)$ and $(\dw,\dd,\dl) = ( 0.06,0.84,0.10)$.}
\label{fig:smallN}
\end{figure}

Identifying an optimal horizon $N^\star$ as a function of $(\ow,\od,\ol,\dw,\dd,\dl)$ appears delicate because $N^\star$ can be large when both styles are only mildly losing.
For example, with $(\ow ,\od, \ol) = (0.49,0,0.51)$ and $(\dw,\dd,\dl) = ( 0.02,0.95,0.03)$, we obtain $N^\star=32$ and $g_{32}^\star\approx 0.453$.
It is not difficult to show that $N^\star$ can grow to infinity by making \Def safer and safer (see Section~\ref{subsec:asympt}).

Therefore, deciding whether a weak player can achieve $g_N^\star>0$ for some horizon $N$ is non-trivial.
When draws are rare, odd horizons often yield a smaller gain than the preceding even horizon.
The next proposition formalizes this no-draw disadvantage. Note that this is not true in general: when $\dd>0$ and $\od=0$, it is possible to construct an example where $g^\star_{2N+1}>g^\star_{2N}$ (e.g., by taking $\Po=(0.4,0,0.6)$ and $\Pd=(0.15,0.7,0.15)$).

\begin{proposition}
\label{prop:p1}
If $\od=\dd=0$, then
$g^\star_{2N+1}\le g^\star_{2N}$.
\end{proposition}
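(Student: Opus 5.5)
With no draws ($\od=\dd=0$), every step changes the score by $\pm1$. After $2N+1$ games the score $X_{2N+1}$ is therefore odd, so it is never $0$. After $2N$ games $X_{2N}$ is even. The plan is to compare the two horizons through the value functions, conditioning on the state reached at time $2N$. Fix any policy for the $(2N+1)$-horizon problem. Its first $2N$ decisions define a policy for the $2N$-horizon problem, and both problems share the same law of $X_{2N}$. It therefore suffices to show, pointwise in the even state $x=X_{2N}$, that the best achievable value of the last game is at most $\sign(x)$. In the language of \eqref{eq:dyn}, I will show $V^{2N+1}_{2N}(x)\le V^{2N}_{2N}(x)=\sign(x)$ for every even $x$. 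Then, because the Bellman operator in \eqref{eq:dyn} is monotone (a maximum of nonnegative combinations), backward induction gives $V^{2N+1}_n\le V^{2N}_n$ for all $n\le 2N$. Evaluating at $n=0$, $x=0$ yields $g^\star_{2N+1}\le g^\star_{2N}$.

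The pointwise comparison is a short case check. If $x\ge 2$, then $x\pm1\ge 1$, so the last game gives value $1=\sign(x)$. If $x\le -2$, it gives $-1=\sign(x)$. The only nontrivial case is $x=0$. There the last game yields $\ow-\ol$ under \Off\ and $\dw-\dl$ under \Def, so $V^{2N+1}_{2N}(0)=\max\{\ow-\ol,\dw-\dl\}$. We must show this is at most $0=\sign(0)$.

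This is the one step that uses a hypothesis beyond the absence of draws: it needs Player~1 to be weak, i.e.\ $\ow\le\ol$ and $\dw\le\dl$. The weakness assumption is the standing setting of the section, and I would state it explicitly in the proof. Without it the inequality fails, for instance when $\ow>\ol$ and $N=0$, where $g_1^\star>0=g_0^\star$. So I expect the main care point to be making this assumption explicit rather than any technical difficulty. With weakness, the $x=0$ case gives $\max\{\ow-\ol,\dw-\dl\}\le 0$, and the monotone backward induction completes the argument.
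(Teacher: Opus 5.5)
Your proposal is correct and follows essentially the paper's proof: you condition on the even score $X_{2N}$, observe that the last game is worth at most $\sign(X_{2N})$, and take expectations under the policy's first $2N$ decisions. The bound on the last game is forced for $|X_{2N}|\ge 2$, and equals $\max\{\ow-\ol,\dw-\dl\}\le 0$ at $X_{2N}=0$, which is exactly where the weakness assumption you rightly make explicit enters.

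One care point: if you phrase the last step as backward induction on value functions, restrict $V^{2N+1}_n(x)\le V^{2N}_n(x)$ to parity-consistent states $x\equiv n \pmod 2$. For instance, $V^{2N+1}_{2N}(-1)=-\min(\ol,\dl)$ exceeds $V^{2N}_{2N}(-1)=-1$ unless $\ol=\dl=1$. Your policy-based formulation avoids this issue altogether.
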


\begin{proof}
  Let us consider a $(2N+1)$-game match.
With $\od=\dd=0$, the score parity matches the time index: after $2N$ games, $X_{2N}$ is even. 
If $X_{2N}\ge 2$ (resp.\ $X_{2N}\le -2$), the final outcome is forced regardless of the last action, hence
$V^{2N+1}_{2N}(X_{2N})=+1$ (resp.\ $-1$).
If $X_{2N}=0$, the last game yields expected value $\max(\ow-\ol,\dw-\dl)<0$ under weakness, so
$V^{2N+1}_{2N}(0)<0$.
In all cases,
\[
V^{2N+1}_{2N}(x)\le V^{2N}_{2N}(x)\qquad\forall x\in\{-2N,\ldots,2N\},
\]
since $V^{2N}_{2N}(x)=\sign(x)$.

Let $\pi^\star$ be optimal for horizon $2N\!+\!1$. Then
\begin{align*}
g^\star_{2N+1}
&= \E_{\pi^\star}\!\left[V^{2N+1}_{2N+1}(X_{2N+1})\,\middle|\,X_0=0\right] \\
& = \E_{\pi^\star}\!\left[V^{2N+1}_{2N}(X_{2N})\,\middle|\,X_0=0\right] \\
&\le \E_{\pi^\star}\!\left[V^{2N}_{2N}(X_{2N})\,\middle|\,X_0=0\right]\le g^\star_{2N}.
\end{align*}
This concludes the proof.
\end{proof}

Let us recall that with no loss of generality we assume $\od \leq \dd$ (draws are more likely when playing \Def).
Note that this implies $\dw \leq \ol$ in the weak player case.

Let us consider two defensive strategies with different probabilities, denoted \Def  and \Def' with $\Pd$ and $\Pd' = (\dw',\dd',\dl')$, respectively.
We say that \Def' {\it dominates} \Def if the corresponding probabilities are stochastically ordered: $\Pd \leq_{st} \Pd'$; this simply means $\dw \leq \dw'$ and $\dl \geq \dl'$.
It should be clear that the gain under \Def' is better than the gain under \Def.
\begin{lemma}\label{lem:dominates}
  If $\Pd \leq_{st} \Pd'$ then $g^\star_N(\Po,\Pd) \leq g^\star_N(\Po,\Pd')$.
\end{lemma}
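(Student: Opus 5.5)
We prove the lemma by backward induction on the Bellman recursion~\eqref{eq:dyn}. The argument has two ingredients: the value functions are nondecreasing in the score, and a nondecreasing function has larger expectation under a stochastically larger increment distribution. Fix $N$, and write $V_n^N$ for the value functions under $(\Po,\Pd)$ and $W_n^N$ for those under $(\Po,\Pd')$. We will show by backward induction on $n$ that
\[
V_n^N(x)\le W_n^N(x)\quad\text{for all } x .
\]
Taking $n=0$, $x=0$ then gives the claim.

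\textbf{Step 1: monotonicity in the score.} We first show that $x\mapsto V_n^N(x)$ is nondecreasing for every $n$, and the same for $W_n^N$. The terminal function $V_N^N=\sign$ is nondecreasing. Suppose $V_{n+1}^N$ is nondecreasing. For each fixed action, the quantity
\[
a\,V_{n+1}^N(x+1)+b\,V_{n+1}^N(x)+c\,V_{n+1}^N(x-1)
\]
is nondecreasing in $x$, because it is a convex combination of nondecreasing functions of $x$. A maximum of two nondecreasing functions is nondecreasing, so $V_n^N$ is nondecreasing. The argument is identical for $W$.

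\textbf{Step 2: comparison of one-step operators.} For any nondecreasing $f$, we claim that
\[
\dw f(x+1)+\dd f(x)+\dl f(x-1)\le \dw' f(x+1)+\dd' f(x)+\dl' f(x-1).
\]
The difference of the right side minus the left side equals
\[
(\dw'-\dw)\bigl(f(x+1)-f(x)\bigr)+(\dl-\dl')\bigl(f(x)-f(x-1)\bigr).
\]
To see this, substitute $\dd=1-\dw-\dl$ and $\dd'=1-\dw'-\dl'$. Both terms are nonnegative, by the dominance hypothesis $\dw\le\dw'$, $\dl\ge\dl'$ and by the monotonicity of $f$.

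\textbf{Step 3: induction.} The base case holds because $V_N^N=W_N^N=\sign$. Assume $V_{n+1}^N\le W_{n+1}^N$ pointwise.

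For the \Off{} action, the coefficients $\ow,\od,\ol$ are nonnegative, so the \Off{} term computed with $V_{n+1}^N$ is at most the same term computed with $W_{n+1}^N$.

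For the \Def{} action, we first apply Step 2 with $f=V_{n+1}^N$, which is nondecreasing by Step 1. This replaces $\Pd$ by $\Pd'$ and can only increase the term. We then use $V_{n+1}^N\le W_{n+1}^N$ together with the nonnegativity of $\Pd'$.

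Taking the maximum of the two actions preserves the inequality, so $V_n^N\le W_n^N$, which completes the induction.

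The only step requiring care is Step 2 combined with Step 1. Stochastic dominance helps only when it is tested against a monotone function, so we must establish that the optimal values are monotone in the score before the comparison can be applied. The rest of the argument is routine.
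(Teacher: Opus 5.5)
Your proof is correct and takes essentially the same route as the paper. Both argue by backward induction on the Bellman recursion: monotonicity of the value functions in the score lets you replace $\Pd$ by $\Pd'$ in the \Def\ term, and the induction hypothesis $V_{n+1}^N\le W_{n+1}^N$ then finishes the step. Your explicit identity in Step 2 spells out the stochastic-dominance inequality that the paper states more briefly.
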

\begin{proof}
  The proof holds by backward induction on $n$.
  We show that for all $x$, the respective values satisfy $V^N_n(x) \leq V^{N'}_n(x)$, $V^N_n(x)$ and $V^N_n(x)$ are non-decreasing in $x$.
  When $n=N$, $V^N_N(x) = V^{N'}_n(x)$ and they are both non-decreasing by definition.
  The Bellman equation at $n$ for the original probabilities $\Po,\Pd$ is:
\begin{small}
\begin{align*}
  V^N_n(x)  = & \max \Big\{  \ow V^N_{n+1}(x+1)  + \od V^N_{n+1}(x)  + \ol V^N_{n+1}(x-1), \\
  &\dw V^N_{n+1}(x+1) + \dd V^N_{n+1}(x) + \dl V^N_{n+1}(x-1) \Big\}\\
  \leq & \max \Big\{  \ow V^N_{n+1}(x+1)  + \od V^N_{n+1}(x)  + \ol V^N_{n+1}(x-1), \\
  & \dw' V^N_{n+1}(x+1) + \dd' V^N_{n+1}(x) + \dl' V^N_{n+1}(x-1) \Big\}\\
      \leq & \max \Big\{  \ow V^{N'}_{n+1}(x+1)  + \od V^{N'}_{n+1}(x)  + \ol V^{N'}_{n+1}(x-1), \\
  & \dw' V^{N'}_{n+1}(x+1) + \dd' V^{N'}_{n+1}(x) + \dl' V^{N'}_{n+1}(x-1) \Big\}
  =  V^{N'}_n(x).
\end{align*}
\end{small}
The first inequality comes from  $V^N_{n+1}(x)$ being  non-decreasing in $x$ by induction, so that
 $V^N_{n+1}(x-1) \leq V^{N}_{n+1}(x)\leq V^{N}_{n+1}(x+1)$. Using $\Pd \leq_{st} \Pd'$ implies the first inequality.
The second inequality comes from  induction $V^N_{n+1}(x-1) \leq V^{N'}_{n+1}(x-1)$, $V^N_{n+1}(x) \leq V^{N'}_{n+1}(x)$, $V^N_{n+1}(x+1) \leq V^{N'}_{n+1}(x+1)$.
The fact that $V^N_n(x)$ and $V^{N'}_n(x)$ are non-decreasing in $x$ comes from the monotoniticy of the Bellman operator and because $V^N_n(x)$ and $V^{N'}_n(x)$ are non-decreasing in $x$. This concludes the induction.
Finally, $ g^\star_N(\Po,\Pd)  = V^N_0(0) \leq V^{N'}_0(0 )= g^\star_N(\Po,\Pd')$.
\end{proof}


In the following, we use the notation $x\lor y := \max(x,y)$ and  $x\land y := \min(x,y)$.

\begin{proposition} \label{prop:p2} Consider a weak player. Then 
  \begin{itemize}
  \item[(i)] If $\dl \geq \ow$, then for all $N\geq 1$, $g^\star_N \leq 0$  
  \item[(ii)] If $\dw = \dl$ (\Def is fair)  then the gain is non-negative:   $g^\star_N \geq 0$. If in addition $\dl=\dw < \ow$ (\Def does not dominate \Off), then   $g^\star_N >0$, for all $N \geq 2$.
  
  \end{itemize}
\end{proposition}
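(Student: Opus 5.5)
The two parts call for different tools: part (i) needs an upper bound that holds for \emph{every} policy, while part (ii) only needs one explicit policy achieving the claimed value. For (i) I will propagate an antisymmetry inequality on the value functions by backward induction. For (ii) I will use always-\Def\ and a one-step modification of it.

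\emph{Part (i).} I will show by backward induction on $n$ that, for all $x$,
\[
V_n^N(x)+V_n^N(-x)\le 0 .
\]
I will also use that $V_n^N$ is non-decreasing in $x$, which was established in the proof of Lemma~\ref{lem:dominates}. At $n=N$ the inequality is an equality, since $V_N^N=\sign$ is odd. For the inductive step, write $u:=V_{n+1}^N$. It suffices to check that for every action $a=(w,d,l)$ used at $x$ and every action $b=(w',d',l')$ used at $-x$, the sum of the two Bellman terms is nonpositive. In the term for $b$ I bound $u(-x+1)\le -u(x-1)$, $u(-x)\le -u(x)$ and $u(-x-1)\le -u(x+1)$; this is allowed because the coefficients $w',d',l'$ are nonnegative. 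The sum is then at most
\[
(w-l')\,u(x+1)+(d-d')\,u(x)+(l-w')\,u(x-1).
\]
The three coefficients sum to zero. Since $u$ is non-decreasing, this expression equals $(w-l')\bigl(u(x+1)-u(x)\bigr)+(w'-l)\bigl(u(x)-u(x-1)\bigr)$. It is therefore $\le0$ as soon as $w\le l'$ and $w'\le l$. Checking the four action pairs:
\begin{itemize}
\item $(\Off,\Off)$ and $(\Def,\Def)$ need $\ow\le\ol$ and $\dw\le\dl$, which is weakness.
\item The mixed pairs need $\ow\le \dl$, which is the hypothesis of (i), and $\dw\le\ol$, which was noted above to follow from weakness together with $\od\le\dd$.
\end{itemize}
Taking $x=0$ gives $2g_N^\star\le 0$.

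\emph{Part (ii), nonnegativity.} When $\dw=\dl$, always playing \Def\ makes $X_N$ a symmetric lazy walk. Hence $\E[\sign X_N]=0$, and so $g_N^\star\ge 0$.

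\emph{Part (ii), strict positivity.} Fix $N\ge2$ and consider the policy that plays \Def\ during games $1,\dots,N-1$. In game $N$ it plays \Off\ if $X_{N-1}=-1$ and \Def\ otherwise. This policy coincides with always-\Def\ except on the event $\{X_{N-1}=-1\}$. On that event the last-game conditional value changes from $-1+\dw$ to $-1+\ow$. The policy's gain therefore equals
\[
0+\pp(X_{N-1}=-1)\,(\ow-\dw),
\]
where the probability is computed under the \Def-walk. This is strictly positive provided $\pp(X_{N-1}=-1)>0$, which holds iff $\dl>0$, because $N-1\ge1$.

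The main obstacle is exactly this degenerate case $\dw=\dl=0$, i.e.\ safe \Def. There the claim appears false. For example, take $\Po=(0.1,0,0.9)$ and $N=2$. Opening with \Off\ gives $\ow-\ol+\ol\ow<0$, while opening with \Def\ gives value $0$ at score $0$ with one game left. Hence $g_2^\star=0$. I would therefore read (ii) as concerning fair \emph{non-safe} defense, $\dw=\dl>0$, and state that assumption explicitly in the proof.
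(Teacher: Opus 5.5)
Your proof is correct, and your objection to (ii) is also correct.

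\textbf{Part (i).} You take a genuinely different route from the paper. The paper argues by stochastic domination. It builds an auxiliary style $\Pd'$ with $\dw'=\dw\lor\ow$ and $\dl'=\dl\land\ol$. It then applies Lemma~\ref{lem:dominates} twice, once to \Def\ and once, by symmetry, to \Off, to get $g_N^\star(\Po,\Pd)\le g_N^\star(\Pd',\Pd')$. Finally it uses that a single-style walk with $\dw'\le\dl'$ has $\pp(X_N>0)\le\pp(X_N<0)$.

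Your backward induction on $V_n^N(x)+V_n^N(-x)\le 0$ needs only the monotonicity of $V_n^N$ and no auxiliary walk. It also makes transparent that the whole input is the four ``cross'' inequalities $\ow\lor\dw\le\ol\land\dl$. That is exactly the condition $\dw'\le\dl'$ the paper needs, including the implicit use of $\dw\le\ol$ (from $\od\le\dd$) that you correctly flag.

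The paper's route buys the intuition that both styles are dominated by one non-winning style. When $\dw'<\dl'$ it also gives the extra fact $g_N^\star\to-1$. However, its closing claim that the bound is negative and tends to $-1$ fails when $\dw'=\dl'$ (e.g.\ $\ow=\dl$); there only $g_N^\star\le 0$ follows. Your argument has no such edge case.

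\textbf{Part (ii).} Nonnegativity is the paper's argument verbatim. For strictness, the paper only sketches two claims: that the optimal policy plays \Off\ at score $-1$ before the last game, and that $-1$ is reachable at time $N-1$ with positive probability. Your comparison policy, with exact gain $\pp(X_{N-1}=-1)\,(\ow-\dw)$ under always-\Def, is the rigorous form of the same idea.

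The reachability claim is precisely what breaks when $\dd=1$. Your counterexample $\Po=(0.1,0,0.9)$, $\Pd=(0,1,0)$, $N=2$, $g_2^\star=0$ is correct. It is also forced by the paper's own Theorem~\ref{th:3}(iii). In the safe case $g_N^\star$ is nondecreasing with limit $0\lor(2\ow/\ol-1)$, so whenever $0<\ow\le\ol/2$ one gets $g_N^\star=0$ for every $N$. Adding $\dl>0$ is therefore necessary, not cosmetic.

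One small tightening: $\dl>0$ alone does not give $\pp(X_{N-1}=-1)>0$, since if $\dd=0$ then $X_{N-1}$ has the parity of $N-1$. Under your hypotheses this case cannot occur. Weakness gives $\ow\le 1/2$, so $\dl<\ow\le 1/2$ forces $\dd=1-2\dl>0$, and hence $\pp(X_{N-1}=-1)\ge\dl\,\dd^{N-2}>0$.
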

\begin{proof}
  
  Proof of (i).
Let us define  \Def' with probabilities $\dw' := \dw\lor \ow, \dl' := \dl \land \ol, \dd' := 1-\dw'-\dl'$. By definition, $\Pd' $ dominates $\Pd$.
Therefore, by using Lemma \ref{lem:dominates}, $g^\star_N(\Po,\Pd) \leq  g^\star_N(\Po,\Pd')$.
By construction,  $\Pd' $ also dominates $\Po$ so  $g^\star_N(\Po,\Pd') \leq  g^\star_N(\Pd',\Pd')$.

 Now in the case where the offensive and defensive choices are the same, the gain is easier  to analyse.
 The gain of \Def' in one game  is equal to $g^{\Def'}_1 = \dw'-\dl' = \dw\lor \ow -  \dl \land \ol \leq 0$ because $\dl \geq  \ow$ and $\dw \leq \dl$
 For any $N\geq 1$, the probability $\pp(X_N > 0)$ is given by Eq. \eqref{eq:oneStyle} (replacing $\Pd$ by $\Pd'$). This is always smaller than $1/2$, and  goes to 0 as $N$ goes to infinity. So the gain is negative and goes to $-1$.



  Proof of (ii).
  Since \Def is fair, playing \Def all the time will give a zero gain, hence $g^\star_N \geq 0$.
  In addition, a case analysis shows that when $\dw= \dl < \ow$, the optimal policy uses \Off after game $N-1$ in state -1 so that  $g^\star_N>0$ since
  $g^\star_N =  \E_{\pi^\star} (V^{N}_{N-1}(X_{N-1}) | X_0 = 0)$ is strictly increasing in $V^{N}_{N-1}(-1)$ and since score $-1$ can be reached at step $N-1$ with positive probability.
  This last point also shows that in general, if no playing style dominates the other, then the gain of the optimal policy is strictly greater than the gain obtained by playing a fixed style.
\end{proof}

The previous proposition has interesting consequences. In particular, if a draw is not possible ($\od = \ol = 0$), then the gain is always negative for a strictly weak player.
It can also be understood on the positive side:
When the player is strictly weak,  the only configuration where her gain has a chance to be positive is when
$\dw < \dl \leq \ow < \ol$, although  this is not sufficient to guarantee a positive gain.

\subsection{Asymptotics}
\label{subsec:asympt}

Unlike the finite-horizon case, the asymptotic behavior as $N\to\infty$ admits a complete classification under the weak-player assumption. If both \Off\ and \Def\ are strictly losing, the match is eventually lost with overwhelming probability and $g_N^\star\to -1$. If \Def\ is fair but non-safe, then $g_N^\star\ge 0$ for every $N$ (by always playing \Def), yet $g_N^\star\to 0$. Finally, if \Def\ is safe ($\dd=1$), then $g_N^\star$ is nondecreasing in $N$ and converges to the value induced by (essentially) catenaccio.

These results are formally given in the next theorem, which uses \emph{catenaccio+} (denoted \catp), a slight refinement of \cate.
Catenaccio+ is defined as follows.
It coincides with catenaccio, except in one case: if after $N-1$ games the score is zero, $X_{N-1}=0$ (so the last game determines whether Player~1 wins or loses), then catenaccio+ selects the better of \Off\ and \Def. Specifically, it chooses \Off\ if $\ow-\ol>\dw-\dl$, and \Def\ otherwise. This differs from catenaccio, which plays \Def\ if the score has ever been positive, and \Off\ if the score has remained non-positive. Clearly, catenaccio+ is (weakly) better than catenaccio.

\begin{theorem}
\label{th:3}
Consider a weak player. 
\begin{itemize}
\item[(i)] If $\ow < \ol$ and $\dw < \dl$, then $g_N^\star \xrightarrow[N\to\infty]{} -1$.
\item[(ii)] If $\ow < \ol$ and $\dw = \dl > 0$, then $\forall N \geq 1, g_N^\star \ge 0$ and $g_N^\star \xrightarrow[N\to\infty]{} 0$.
\item[(iii)] If $\ow < \ol$ and $\dd = 1$, then $g_N^\star= 0 \lor \max_{n \le N} g_n^{\catp} $ and  
$g_N^\star \xrightarrow[N\to\infty]{} 0 \lor \Bigl(2\frac{\ow}{\ol} - 1\Bigr)$.
\end{itemize}
\end{theorem}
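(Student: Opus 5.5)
Part (i) uses a drift argument valid for every policy. Set $\delta:=\min(\ol-\ow,\dl-\dw)>0$. Whatever adaptive policy is used, the increments $\xi_n=X_n-X_{n-1}$ satisfy $\E[\xi_n\mid\mathcal F_{n-1}]\le-\delta$ and $|\xi_n|\le 1$. Hence $M_n:=X_n+n\delta$ is a supermartingale with bounded increments. By Azuma--Hoeffding,
\[
\pp(X_N\ge 0)\le\pp(M_N-M_0\ge N\delta)\le e^{-N\delta^2/2},
\]
uniformly over policies. Then for every policy the gain is at most $-1+2\pp(X_N\ge 0)$. Therefore $g_N^\star\le -1+2e^{-N\delta^2/2}\to-1$, and since $g_N^\star\ge -1$ trivially, (i) follows.

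For (ii), the lower bound $g_N^\star\ge0$ is Proposition~\ref{prop:p2}(ii). For the upper bound I again use a martingale argument. Under any policy, $X_n$ is now a supermartingale, because both drifts are $\le 0$. The task is to show that $\E[\sign(X_N)]\le o(1)$. Write $\E[\sign X_N]=\pp(X_N>0)-\pp(X_N<0)$. The main obstacle is that a clever policy could try to keep $X_N$ concentrated near $0^+$, for example by playing \Def\ once ahead. Since \Def\ is non-safe, however, its variance per step is $2\dw>0$, and \Off\ has variance bounded below as well, because $\ow<\ol$ forces $\ol>0$. So every step has conditional variance at least $\sigma^2>0$.

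My plan is as follows. Fix $K$. Either the walk stays in $[-K,K]$ for a long time, or it exits. If it exits, it does so either below, where the supermartingale property makes return unlikely to matter, or above at level $K$. Consider the quantity $\E[\sign X_N]\le \E[(X_N\wedge K)\vee(-K)]/K+\pp(|X_N|<K)$-type bounds. More precisely, use $\sign(x)\le x/K+\mathbf 1_{\{0<x<K\}}+\dots$; concretely,
\[
\sign(x)\le \frac{x}{K}+2\cdot\mathbf 1_{\{|x|<K\}}.
\]
This holds for $|x|\ge K$ since $x/K\ge1$ when $x\ge K$ and $x/K\le -1$ when $x\le -K$. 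For $|x|<K$, the right-hand side is at least $-1+2=1\ge\sign x$. Taking expectations gives $\E[\sign X_N]\le \E[X_N]/K+2\pp(|X_N|<K)\le 2\pp(|X_N|<K)$. It remains to show $\sup_{\text{policies}}\pp(|X_N|<K)\to0$ for fixed $K$, which is an anti-concentration statement for walks whose conditional variance is at least $\sigma^2$. I would prove it by splitting time into blocks of length $m$ with $m\sigma^2\gg K^2$: from any state, the probability of remaining in a $K$-window for $m$ more steps is at most some $\rho<1$, by a uniform estimate applied to $X_n^2-\sum\mathrm{Var}$ via optional stopping at the exit time. Iterating over blocks bounds $\pp(\text{stay in window for }N\text{ steps})$, but this event differs from $\{|X_N|<K\}$. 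To handle the difference, I would instead use a Lévy-type concentration bound (Kolmogorov--Rogozin) adapted to martingale differences, or simply the last block: conditionally on $\mathcal F_{N-m}$, $\pp(|X_N|<K)\le\rho$. Doing this only reaches $\rho$, not $0$. So, letting first $N\to\infty$ then $K\to\infty$ while tuning $m$, I need $\rho=\rho(K,m)\to 0$, which requires a genuine small-ball estimate. I expect this to be the hardest step, and I would try a martingale local-CLT/Berry--Esseen bound for increments that are bounded with variance bounded below.

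For (iii), with $\dd=1$, playing \Def\ freezes the score, so $g^\star_N\ge0$ by always playing \Def. The reduction I plan is: an optimal policy stops (plays \Def\ forever) at some stopping time $n$, and before that plays \Off. Indeed, \Def\ does not change the state, so I would show by induction on $N$ that $V^N_n(x)=\max(\sign x,\ \text{value of \Off\ at }(n,x))$. This identifies the problem as optimal stopping of the \Off-walk with payoff $\sign(X_\tau)$ and deadline $N$. Because the \Off-walk has nonpositive drift, stopping at $+1$ is optimal once ahead, and at $0$ one stops unless favorable. This yields the formula $0\lor\max_{n\le N}g_n^{\catp}$, after checking that a policy that leaves $0$ and must return is captured by a shorter horizon, hence the max over $n$. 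The limit then follows from Theorem~\ref{th:safeD}, since $g_n^{\catp}\ge g_n^{\cate}$ and the last-step correction vanishes asymptotically.
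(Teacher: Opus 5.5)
Part (i) is correct and is the paper's argument: the supermartingale $X_n+\delta n$ plus Azuma--Hoeffding. Your deviation level $N\delta$ is even cleaner. Parts (ii) and (iii), however, each contain a genuine gap.

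In (ii), the pointwise bound $\sign(x)\le x/K+2\cdot\mathbf{1}_{\{|x|<K\}}$ is false for $x<-K$. At $x=-2K$ it reads $-1\le -2$: there you need $x/K\ge -1$, and you have the reverse. This is not a repairable slip. Your plan uses only $\E[X_N]\le 0$, bounded increments, a variance floor and anti-concentration. All of these also hold when \Off\ is fair, e.g.\ $\Po=(1/2,0,1/2)$ and $\Pd=(\epsilon/2,1-\epsilon,\epsilon/2)$. There, the policy ``\Off\ when $X_n\le 0$, \Def\ when $X_n>0$'' produces an oscillating random walk that is slowed down while ahead. It is anti-concentrated, yet $\E[\sign(X_N)]$ tends to a positive limit, close to $1$ for small $\epsilon$. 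So the real threat is variance asymmetry (high variance when behind, low when ahead), not concentration near $0^+$, and even a complete small-ball estimate would not finish the proof. What excludes this is the \emph{strict} drift $\ow<\ol$, which your argument never uses beyond $\ol>0$; incidentally, $\ol>0$ does not give \Off\ positive variance when $\ol=1$. The missing step is to bound how often \Off\ can be afforded. With $K_N$ the number of \Off\ plays, $X_N$ is a martingale minus $(\ol-\ow)K_N$, and one must separate two regimes. If $K_N=o(\sqrt N)$, $X_N$ stays within $o(\sqrt N)$ of the fair \Def\ walk. If $K_N\gtrsim\sqrt N$, the accumulated drift dominates the fluctuations. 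This is the paper's dichotomy.

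In (iii), your reduction $V^N_n(x)=\max\bigl(\sign(x),Q^N_n(x)\bigr)$, with $Q^N_n(x)$ the value of playing \Off\ at $(n,x)$, is correct. It turns the problem into optimal stopping of the \Off\ walk, which is cleaner than the paper's first-step case analysis. The step that fails is the claim that a return to $0$ ``is captured by a shorter horizon''. On returning to $0$ with $r$ games left, the optimal rule freezes iff the \Off-continuation value with $r$ games left is $\le 0$, and this value is nondecreasing in $r$. So an optimal policy may leave $0$ with many games left and freeze at a later return, whereas \catp\ keeps playing \Off\ at every return except in the last game.

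In fact the identity $g_N^\star=0\lor\max_{n\le N}g_n^{\catp}$ is false, and the paper's induction does not establish it either. Take $\Po=(0.38,0,0.62)$, $\dd=1$, $N=4$. Backward induction gives $g_4^\star=\ow-\ol^3(1+\ow)\approx 0.0511$, with \Def\ optimal at $(n,x)=(2,0)$, where \Off\ is worth $\ow-\ol^2=-0.0044$. Meanwhile $g_1^{\catp}=0$, $g_2^{\catp}=g_3^{\catp}=\ow-\ol^2<0$, and $g_4^{\catp}=g_4^\star-\ow\ol(\ol^2-\ow)\approx 0.0501$.

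What survives is the inequality $\ge$ and the limit, and your framework yields the limit without the identity. The lower bound comes from $0$ and \cate\ (Theorem~\ref{th:safeD}). For the upper bound, set $\rho=\ow/\ol$ and $v=0\lor(2\rho-1)$, and define $f(x)=1$ for $x\ge 1$ and $f(x)=-1+(1+v)\rho^{-x}$ for $x\le 0$. This $f$ dominates $\sign$, is superharmonic for the \Off\ walk, and is unchanged by \Def. Hence $f(X_n)$ is a supermartingale under any policy, and $g_N^\star\le f(0)=v$ for every $N$.
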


\begin{proof}
\noindent\emph{(i).}
Let $\pi$ be an arbitrary policy (defined for all $n\in\N$ and $x\in\Z$, with $\pi(x,n)\in\{\Off,\Def\}$), and let $(X_n)_{n\ge 0}$ be the resulting score process with natural filtration $(\mathcal F_n)_{n\ge 0}$.
Set 
$\delta := \ol-\ow \land \dl-\dw > 0$.
By definition of the game, whatever action is taken at round~$n$,
\[
\E[X_{n+1}\mid \mathcal F_n] \le X_n-\delta .
\]
Define $Y_n:=X_n+\delta n$. Then $(Y_n)$ is a supermartingale:
\[
\E[Y_{n+1}\mid \mathcal F_n]\le Y_n
\]
with bounded increments as
\[
|Y_{n+1}-Y_n|
=|(X_{n+1}-X_n)+\delta|
\le 1+\delta ,
\qquad Y_0=0.
\]
Therefore, Azuma--Hoeffding for supermartingales yields \cite[Corollary~2.20]{Wainwright2019HDS}
\[
\pp(Y_n>h)\le \exp\!\left(-\frac{h^2}{2n(1+\delta)^2}\right),
\]
for any $h>0$.
Taking $h=(\delta n)^{2/3}$ and returning to $X_n$ gives
\begin{equation}\label{eq:az}
\pp\!\left(X_n>(\delta n)^{2/3}-\delta n\right)
\le \exp\!\left(-\frac{(\delta n)^{1/3}}{2(1+\delta)^2}\right).
\end{equation}
Since $(\delta n)^{2/3}-\delta n\to -\infty$, it follows that $\pp(X_n<0)\to 1$, hence $g_n^\pi\to -1$ as $n\to\infty$.

\medskip
\noindent\emph{(ii).}
Clearly $g_N^\star\ge 0$ for all $N$, since the stationary policy that plays \Def\ at every step yields gain $0$.

To prove $g_N^\star\to 0$, fix an arbitrary policy $\pi$ and let $X_n$ be the induced score process. Here, 
\[
\E[X_{n+1}\mid \mathcal F_n]\le X_n.
\]
Thus $(X_n)$ is a supermartingale with increments bounded by $1$.  However, Azuma--Hoeffding
is not sufficient to conclude that $\pp(X_n>0)$ falls below $1/2$.

Let us refine the construction and let $K_n$ be the number of times $\pi$ uses \Off\ in the first $n$ steps.
Here, the idea is to decompose
\[
X_n=\sum_{i=1}^{K_n} O_i+\sum_{j=1}^{n-K_n} D_j,
\]
where $(O_i)_{i\in\N}$ are i.i.d.\ with values $+1$ w.p.\ $\ow$, $0$ w.p.\ $\od$, and $-1$ w.p.\ $\ol$, and $(D_j)_{j\in\N}$ are i.i.d.\ with values $+1$ w.p.\ $\dw$, $0$ w.p.\ $\dd$, and $-1$ w.p.\ $\dl=\dw$. Since $\pi$ depends on $(n,X_n)$, $K_n$ is a random variable depending on the realized samples $\{O_i,D_j:i,j\le n\}$.

Consider the event $ \mathcal S:=\{(O_i)_{i \in \N}, (D_j)_{j\in \N} s.t.  K_n=o(\sqrt n)\}$.
First, let us assume first that $\mathcal S$ holds. Since $-1\le O_i\le 1$,
\begin{equation}\label{eq:bound}
\sum_{j=1}^{n-K_n}D_j-K_n \le X_n \le \sum_{j=1}^{n-K_n}D_j+K_n .
\end{equation}
Using also $-1\le D_j\le 1$ yields
\begin{equation}\label{eq:bound2}
\sum_{j=1}^{n}D_j-2K_n \le X_n \le \sum_{j=1}^{n}D_j+2K_n .
\end{equation}
By the central limit theorem,
\[
\frac{1}{\sqrt n}\sum_{j=1}^{n} D_j \Rightarrow \mathcal N(0,\dw).
\]
Dividing \eqref{eq:bound2} by $\sqrt n$ and using $K_n/\sqrt n\to 0$ on $\mathcal S$, we obtain
\[
\pp(X_n>0)\to \tfrac12,
\qquad
\pp(X_n<0)\to \tfrac12,
\]
hence $g_n^\pi\to 0$ on $\mathcal S$.

Assume now that $\mathcal S^c$ holds. Then $K_n$ is of order $\sqrt n$ or larger, and $\sum_{i=1}^{K_n} O_i$ has negative drift. Using the same Azuma--Hoeffding argument as in \eqref{eq:az} together with a union bound yields $\pp(\sum_{i=1}^{K_n}O_i>0)\to 0$. Indeed, for $n$ large enough $K_n\ge \sqrt n$, and with $h=K_n^{2/3}$,
\begin{align*}
&\pp\!\left(\sum_{i=1}^{K_n} O_i >
K_n^{2/3}-(\ol-\ow)K_n\right)\\
&\le \sum_{k=\sqrt n}^{n}
\pp\!\left(\sum_{i=1}^{k} O_i >
k^{2/3}-(\ol-\ow)k\right)  \\
&\le \sum_{k=\sqrt n}^{n}\exp\!\left(-\frac{k^{4/3}}{2k}\right)
\le n\exp\!\left(-\frac{n^{1/6}}{2}\right).
\end{align*}
Consequently,
\begin{align*}
\pp\!\left(\sum_{i=1}^{K_n} O_i>0\right)
& \le
\pp\!\left(\sum_{i=1}^{K_n} O_i >
n^{1/3}-(\ol-\ow)\sqrt n\right) \\
& \le
\pp\!\left(\sum_{i=1}^{K_n} O_i >
K_n^{2/3}-(\ol-\ow)K_n\right)  \\
& \le n\exp\!\left(-\frac{n^{1/6}}{2}\right),
\end{align*}
which vanishes as $n\to\infty$. By symmetry, $\pp(\sum_{j=1}^{n-K_n}D_j>0)\le 1/2$, hence for large $n$ we have $\pp(X_n>0)<1/2$ and therefore $g_n^\pi<0$ on $\mathcal S^c$.

Combining the cases $\mathcal S$ and $\mathcal S^c$ gives
\[
\limsup_{n\to\infty} g_n^\pi \le 0
\]
for every policy $\pi$. Since $g_N^\star\ge 0$ for all $N$, it follows that $g_N^\star\to 0$ as $N\to\infty$.

\medskip
\noindent\emph{(iii).}
First, $g_{N+1}^\star\ge g_N^\star$ when $N \geq 1$ because one can use the following policy   for $N+1$ steps:
{\it Use \Def at the first step and that use the optimal policy for the remaining $N$ steps.}
Since \Def is safe, the first game does not change the score. The gain of this policy is the gain of the optimal policy for $N$ games.

Now, let us prove that
\begin{equation}\label{eq:opt=cat}
g_N^\star = 0\lor \max_{n\le N} g_n^{\catp}.
\end{equation}
The inequality $g_N^\star \ge 0\lor \max_{n\le N} g_n^{\catp}$ is immediate: the optimal policy dominates the stationary \Def\ policy (gain $0$) and, for each $n\le N$, the policy that plays \Def\ for the first $N-n$ rounds and then plays catenaccio+ for $n$ rounds.

We now show $g_N^\star \le 0\lor \max_{n\le N} g_n^{\catp}$ by induction on $N$.
For $N=1$, catenaccio+ is optimal.
Fix $N\ge 2$ and consider the score after the first round under an optimal policy.
If the optimal policy plays \Def\ at step~1, then $X_1=0$ and the gain reduces to $g_{N-1}^\star$, which is bounded by $0\lor \max_{n\le N} g_n^{\catp}$ by the induction hypothesis.
If the optimal policy plays \Off\ at step~1, we distinguish the cases $X_1\in\{-1,0,1\}$:
\begin{itemize}
 \item
If $X_1=1$, then playing \Def\ thereafter guarantees a match win; this coincides with catenaccio+, hence the conditional gain is bounded by $0\lor \max_{n\le N} g_n^{\catp}$: $(g^\star_N|X_1=1) = +1 = (g^\catp_N|X_1=1)  \leq  0 \lor  \max_{n\leq N} (g^\catp_n|X_1=1)$

 \item
If $X_1=0$, we reduce to horizon $N-1$ and apply the induction hypothesis:
$(g^\star_N|X_1 = 0) =   g^\star_{N-1} \leq 0\lor   \max_{n\leq N-1} g^\catp_n \leq 0\lor \max_{n\leq N} g^\catp_n$.

 \item
If $X_1 = -1$ then, there are two sub-cases: if score $0$ is not reached, then the gain is $-1  \leq \max_{n\leq N} g^\catp_n$.
If score $0$ is reached at some step $K$,  then we are back to the problem with $N-K$ games and induction applies. The gain $(g^\star_N| X_1 = -1) = g^\star_{N-K}  \leq 0\lor \max_{n\leq N-K} g^\catp_n \leq 0 \lor \max_{n\leq N} g^\catp_n$.

\end{itemize}


Finally, $\catp$ differs from \cate\ only on the event $\{X_{N-1}=0\}$, whose probability under either policy vanishes as $N\to\infty$; hence
$\lim_{N\to\infty} g_N^{\catp}=\lim_{N\to\infty} g_N^{\cate}$.
By Theorem~\ref{th:safeD}, $\lim_{N\to\infty} g_N^{\cate}=2\ow/\ol-1$, and the coincidence of the (nonnegative) limits of $g_N^\star$ and $g_N^{\catp}$ follows directly from \eqref{eq:opt=cat}.
\end{proof}

Theorem~\ref{th:3} yields a complete asymptotic classification for a weak player: the asymptotic gain is always in $\{-1,0\}$ outside the safe regime, whereas safe defense yields a continuum of limits in $[0,1]$ through the parameter $\ow/\ol$. Point~(iii) also formalizes that \cate/\catp\ is essentially optimal when \Def\ is safe, but becomes asymptotically suboptimal as soon as \Def\ is merely fair and non-safe.



\section{Conclusion}

This paper identifies regimes in which an apparently weaker player can nevertheless attain a positive expected match outcome by adapting between two styles of play.
The optimal advantage can be non-monotone in the number of games, and positivity may occur only at specific horizons.
Finally, we provide a complete long-run classification for weak players.

Extending these insights to richer match structures, e.g., multiple outcome levels and multi-point scoring rules, is a natural direction for future work.

\section{Second conclusion}

To conclude on a sociological note, several behavioral studies suggest that the defeat of an apparently stronger player may generate frustration that can escalate into aggressive behavior.
In such situations, the wiser strategy for the other player may be the one suggested by C-3PO to R2-D2 during a holographic chess game against Chewbacca:
``Let the Wookiee win.''

\section*{Acknowledgments}

This work has been partially supported by the MIAI cluster ANR-23-IACL-0006.

\bibliographystyle{abbrv}
\bibliography{references.bib} 

\end{document}